\documentclass[10pt,a4paper]{article}

\usepackage{amsfonts,amssymb,amsmath,amsthm}
\usepackage{fullpage}

\usepackage[english]{babel}
\usepackage[utf8]{inputenc}
\usepackage{algorithm}
\usepackage{tikz-cd}
\usepackage[noend]{algpseudocode}

\newtheorem{lemma}{Lemma}
\newtheorem{claim}{Claim}
\newtheorem{theorem}{Theorem}

\newcommand{\lref}[1]{(\ref{#1})}
\newcommand{\sset}[1]{\{ #1 \}}

\newcommand{\KM}[1]{{\color{blue} #1}}

\title{Ratio-Balanced Maximum Flows}

\author{Hannaneh Akrami\thanks{Sharif University of Technology, Iran} \and Kurt Mehlhorn\thanks{Max Planck Institute for Informatics, Germany} \and Tommy Odland\thanks{Sparebanken Vest, Norway}}

\begin{document}

\maketitle

\begin{abstract}When a loan is approved for a person or company, the bank is subject to \emph{credit risk}; the risk that the lender defaults. To mitigate this risk, a bank will require some form of \emph{security}, which will be collected if the lender defaults. Accounts can be secured by several securities and a security can be used for several accounts. The goal is to fractionally assign the securities to the accounts so as to balance the risk.

This situation can be modelled by a bipartite graph. We have a set $S$ of securities and a set $A$ of accounts. Each security has a \emph{value} $v_i$ and each account has an \emph{exposure} $e_j$. 
If a security $i$ can be used to secure an account $j$, we have an edge from $i$ to $j$. 
Let $f_{ij}$ be part of security $i$'s value used to secure account $j$. We are searching for a maximum flow that send at most $v_i$ units out of node $i \in S$ and at most $e_j$ units into node $j \in A$. Then $s_j = e_j - \sum_i f_{ij}$ is the unsecured part of account $j$. We are searching for the maximum flow that minimizes $\sum_j s_j^2/e_j$. 
\end{abstract}

\section{Introduction}

When a loan is approved for a person or company, the bank is subject to \emph{credit risk}; the risk that the lender defaults.
To mitigate this risk, a bank will require some form of \emph{security}, which will be collected if the lender defaults.
The bank opens a financial account for the loan, and one or more securities may be connected to it.
It is also possible that a security object is connected to more than one account.
Many-to-many relationships between securities and accounts rarely occur in the private market, but in the corporate market they are not uncommon. 

We can model this situation by a bipartite graph. 
We have a set $S$ of securities and a set $A$ of accounts. 
Each security has a \emph{value} $v_i$ and each account has an \emph{exposure} $e_j$. 
If a security $i$ can be used to secure an account $j$, we have an edge from $i$ to $j$. 
Let $E$ be the set of edges. 
The question is then how much of security $i$'s value should be used to secure account $j$.
Let us use $f_{ij}$ to denote this value. 
Clearly, we cannot use a security to more than its value and we do not want to secure an account to more than its amount, i.e.,\footnote{All summations (except if a summation range is explicitly specified) with summation index $i$ are over $i \in S$, all summations with summation index $j$ are over $j \in A$ and all summations with summation indices $i,j$ are over $ij \in E$.}
\begin{align}
  \label{v-constraint}  \sum_j f_{ij} &\le v_i \qquad  &\text{for all securities $i \in S$}\\
  \label{e-constraint} \sum_i f_{ij}  &\le e_j &\text{for all accounts $j \in A$.}
\end{align}
The unsecured part of the accounts is then $\sum_{j \in A}(e_j - \sum_i f_{ij}) = \sum_j e_j - \sum_{ij} f_{ij}$. 
Clearly, we want to make the unsecured part as small as possible, i.e., we want
\begin{equation}\label{max-flow}  \sum_{ij} f_{ij} \quad \text{to be maximum.} \end{equation}
In other words, we want a maximum flow from securities to accounts obeying the capacity constraints~\lref{v-constraint} and~\lref{e-constraint}. 

The \emph{surplus (unsecured part)} of an account $j$ is equal to $s_j = e_j - \sum_i f_{ij}$ and the \emph{unsecured fraction} or \emph{risk ratio} of an account $j$ is equal to $r_j = s_j/e_j$. 
It is desirable that all accounts are secured to the same fraction as much as possible. 
Formally, if security $i$ is used for account $j$ ($f_{ij} > 0$) and could be used for account $\ell$ ($i\ell \in E$), then the unsecured fraction of account $\ell$ is at most the unsecured fraction of account $j$ ($r_\ell \le r_j$). 
Otherwise, we could divert some of the flow $f_{ij}$ onto the edge $i\ell$ and make the secured fractions more equal. 
Formally,
\begin{equation}\label{ratio-balance} \text{if $f_{ij} > 0$ and $i\ell \in E$ then $r_j \ge r_\ell$.} \end{equation}

We have now defined the \emph{ratio-balanced maximum flow problem}: among the maximum flows satisfying the capacity constraints~\lref{v-constraint} and~\lref{e-constraint}, find the one that satisfies the ratio-constraint~\lref{ratio-balance}. 
The following example illustrates the concept. 

	

	\[\begin{tikzcd}[column sep=10em, row sep=2em]
	v_1 = 3 \arrow{r}{3} & e_1 = 4,\ s_1 = 4 - 3 \\
	v_2 = 3 \arrow{r}{3} \arrow{ur}{0} & e_2 = 6,\ s_2 = 6 - 4 \\
	v_3 = 5 \arrow{r}{4} \arrow{ur}{1} & e_3 = 6,\ s_3 = 6 - 4 \\
	\end{tikzcd}\]
In the ratio-balanced maximum flow $f_{11} = 3 $, 
	$f_{21} = 0 $,
	$f_{22} = 3 $,
	$f_{32} = 1 $ and
	$f_{33} = 4 $, and the ratios are $r_1 = {1}/{4}$ and $r_2 = r_3 = {1}/{3}$. 
	
A related problem is to compute the flow that minimizes the squared 2-norm $\sum_j s_j^2$ of the unsecured parts. 
This problem is known as balanced flows~\cite{DPSV08} and can be solved in polynomial time. 
The papers~\cite{DPSV08,Darwish-Mehlhorn} suggested to us that ratio-balanced flows can be computed efficiently.

This paper is structured as follows. 
In Section~\ref{Alternative Characterization}, we give an alternative characterization for ratio-balanced maximum flows and show that they are the flows minimizing $\sum_j r_j^2 e_j$ subject to the capacity constraints~\lref{v-constraint} and~\lref{e-constraint}. 
In Section~\ref{Combinatorial Algorithm} we give a combinatorial algorithm and show that a ratio-balanced flow can be computed by at most $n \log (nM)$ maximum flow computation. 
This assumes that all values and exposures are integers bounded by $M$.
In Section~\ref{Quadratic Program} we give a quadratic program for ratio-balanced flows and in Section~\ref{Extensions} we discuss generalizations.

\section{Alternative Characterization}\label{Alternative Characterization}

We call a flow minimizing $\sum_j r_j^2 e_j$ an \emph{MWSR} (\emph{minimum weighted sum of squared risk ratios}) flow. 
Let $f$ and $g$ be two flows. 
We call $f$ and $g$ \emph{equivalent} if the risk ratios of all accounts with respect to $f$ and $g$ are equal, i.e., for all $j \in A$, $r_j^f = (e_j - \sum_i f_{ij})/e_j = (e_j - \sum_i g_{ij})/e_j = r_j^g$.

\begin{theorem} A flow $f$ is a ratio-balanced maximum flow if and only if it is MWSR. 
	All ratio-balanced flows are equivalent. 
\end{theorem}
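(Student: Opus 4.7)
The plan is to establish the biconditional through the KKT optimality conditions of MWSR, viewed as a convex quadratic program, and then to deduce the equivalence of all ratio-balanced maximum flows from strict convexity of the objective in the surplus vector $(s_j)_{j\in A}$. Let $\phi(f) = \sum_j s_j^2/e_j$ denote the objective and $P$ the polytope of flows satisfying \lref{v-constraint}, \lref{e-constraint}, and $f_{ij} \ge 0$. Since $\phi$ is convex and the constraints are linear, KKT conditions are both necessary and sufficient for $f \in P$ to be MWSR.

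I would first show separately that every MWSR flow is a maximum flow: otherwise, an augmenting path lets one push an $\epsilon > 0$ of additional flow into some account $j$ with $s_j > 0$, strictly decreasing $s_j^2/e_j$ while leaving the other surpluses unchanged and contradicting optimality. With multipliers $\alpha_i \ge 0$ for \lref{v-constraint}, $\beta_j \ge 0$ for \lref{e-constraint}, and $\gamma_{ij} \ge 0$ for $f_{ij} \ge 0$, the stationarity condition $\partial\phi/\partial f_{ij} = -2 r_j$ reads $\gamma_{ij} = \alpha_i + \beta_j - 2 r_j$, together with the complementary slackness relations $\gamma_{ij} f_{ij} = 0$, $\beta_j s_j = 0$, and $\alpha_i(v_i - \sum_j f_{ij}) = 0$.

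For MWSR $\Rightarrow$ ratio-balanced maximum flow, take $f_{ij} > 0$ and any edge $i\ell \in E$: $\gamma_{ij} = 0$ gives $\alpha_i + \beta_j = 2 r_j$, while $\gamma_{i\ell} \ge 0$ gives $\alpha_i + \beta_\ell \ge 2 r_\ell$. When $r_j > 0$, $s_j > 0$ forces $\beta_j = 0$, so $\alpha_i = 2 r_j$, and a symmetric analysis of $r_\ell$ (splitting on whether $r_\ell = 0$) yields $r_j \ge r_\ell$; when $r_j = 0$, $\alpha_i = \beta_j = 0$ and $\gamma_{i\ell} \ge 0$ then forces $r_\ell = 0$ as well. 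Conversely, given a ratio-balanced maximum flow $f$, I would exhibit feasible duals by setting $\alpha_i = 2\max\{r_\ell : i\ell \in E\}$, $\beta_j = 0$, and $\gamma_{ij} = \alpha_i - 2r_j$. Ratio-balance gives $f_{ij} > 0 \Rightarrow r_j = \alpha_i/2$, hence $\gamma_{ij} = 0$; maximality enforces $\alpha_i(v_i - \sum_j f_{ij}) = 0$, since $\alpha_i > 0$ produces a neighbor $\ell$ of $i$ with $s_\ell > 0$, and an unsaturated $i$ would then expose an augmenting path from source to sink through $i$ and $\ell$.

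Finally, for the equivalence statement, let $f, g$ be two ratio-balanced maximum flows; both are MWSR by the above, with common optimal value $\phi^\ast$. Their midpoint $h = (f+g)/2$ lies in $P$, and a direct expansion gives $\phi(h) = \tfrac{1}{2}(\phi(f)+\phi(g)) - \sum_j (s_j^f - s_j^g)^2/(4 e_j)$. Since $\phi(h) \ge \phi^\ast$, the correction term must vanish, so $s_j^f = s_j^g$ for every $j$, meaning $f$ and $g$ are equivalent. I expect the main care point to be the KKT bookkeeping in the corner cases where $r_j = 0$ or a security has no positive-risk neighbor, but nothing in the argument presents a real obstruction.
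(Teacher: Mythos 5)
Your proof is correct, but it reaches the theorem by a genuinely different route than the paper, and the difference is worth noting. The paper proves only one implication directly (MWSR $\Rightarrow$ ratio-balanced maximum, via the same augmenting-path argument you use plus an elementary $\varepsilon$-exchange computation that is a hands-on version of your stationarity condition), and then proves equivalence of all ratio-balanced maximum flows by induction on $|S|$: it peels off the block $A'$ of accounts with the largest risk ratio $R$ under $f$ together with the securities $S'$ feeding them, uses a flow-conservation counting argument to force $r^g_j = R$ on $A'$ and to confine $g$'s flow from $S'$ to $A'$, and recurses; the converse implication (ratio-balanced maximum $\Rightarrow$ MWSR) is then implicit, since an MWSR flow exists, is ratio-balanced, and equivalence forces every ratio-balanced maximum flow to attain the same objective value. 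You invert this logical order: you prove the converse directly by exhibiting the explicit dual certificate $\alpha_i = 2\max\{r_\ell : i\ell\in E\}$, $\beta_j = 0$, $\gamma_{ij}=\alpha_i - 2r_j$ (which I checked — the complementary slackness conditions do all follow from ratio-balance and maximality, including the corner cases $r_j=0$ and securities with only zero-risk neighbours), and you then get equivalence for free from the midpoint identity $\phi\bigl(\tfrac{f+g}{2}\bigr) = \tfrac12(\phi(f)+\phi(g)) - \sum_j (s^f_j - s^g_j)^2/(4e_j)$, i.e.\ from strict convexity of $\phi$ in the surplus vector. Your argument is shorter and cleaner for the theorem as stated; what the paper's inductive decomposition buys is the structural picture (the nested blocks of equal risk ratio and the saturation of the securities feeding them) that is reused in the combinatorial algorithm of Section~\ref{Combinatorial Algorithm}, so the two proofs are complementary rather than interchangeable in the context of the whole paper.
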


\begin{proof} We first show that an MWSR flow is maximum and satisfies the ratio-constraint (4). 
	Thus an MWSR flow is ratio-balanced. 
	We then go on to show that any two ratio-balanced flows are equivalent.

\begin{claim} An MWSR flow subject to the capacity constraints is a maximum flow. 
\end{claim}

\begin{proof} Assume otherwise and let $f$ be an MWSR flow. 
If $f$ is not a maximum flow then there is an augmenting path with respect to it, i.e., a sequence $i_1,j_1,i_2,j_2,\ldots,i_k,j_k$ such that $i_\ell \in S$ and $j_\ell \in A$ for all $\ell$, $\sum_{j \in \delta(i_1)} f_{i_1j} < v_{i_1}$, $\sum_{i \in \delta(j_k)} f_{i j_k} < e_{j_k}$ and $f_{j_\ell i_{\ell + 1}} > 0$ for all $\ell$. 
We increase the flow on all edges $(i_\ell, j_\ell)$ by a small amount, decrease the flow on the edges $(j_\ell,i_{\ell + 1})$ by the same amount. 
We obtain a flow that obeys the capacity constraints and for which $r_{j_k}$ is smaller. 
\end{proof}

\begin{claim} A{n} MWSR flow subject to the capacity constraints satisfies the ratio-constraint~\lref{ratio-balance}. 
\end{claim}

\begin{proof}
The derivative of the objective with respect to $f_{ij}$ is equal to
  \[    -2 e_j r_j\frac{1}{e_j} = -2 r_j.\]
  Therefore decreasing the flow on $(i,j)$ by an infinitesimal amount $\varepsilon$ and increasing the flow on $(i,\ell)$ by the same amount, will change the objective by
  \[   (2r_j - 2r_\ell)\varepsilon = 2 (r_j - r_\ell) \varepsilon.\]
  If $r_j < r_\ell$, the change would be  negative, a contradiction.
\end{proof}

We have shown that a MWSR flow is ratio-balanced. 
Now we prove that all ratio-balanced flows are equivalent. 
We may assume that every security node can be used for some account. 
Otherwise, we may simply remove the security. 
Let $f$ and $g$ be two ratio-balanced flows. 

Our proof is by induction on the number of nodes in $S$. 
If $|S| = 0$ then $f$ and $g$ are equivalent.
Assume $|S| > 0$ and for every graph in which the number of security nodes is less than $|S|$, $f$ and $g$ are equivalent. 
For any $j \in A$, let $r^f_j$ and $r^g_j$ be the risk-ratio of node $j$ under $f$ and $g$ respectively. 
For any $j \in A$ and $i \in S$, let $f_{ij}$ and $g_{ij}$ be the flow from $i$ to $j$ under $f$ and $g$ respectively. 

Without loss of generality we assume that the maximum risk ratio under the flow f is no smaller than the maximum risk ratio under the flow $g$, i.e., $R := \max_j r^f_j \geq \max_j r^g_j $.

If $R = 0$ then $r^f_j = r^g_j = 0$ for all $j \in A$ and $f$ and $g$ are equivalent. 

Now assume that $R > 0$. 
Let $A' = \{ j; r^f_j = R\}$ be the least secured nodes under $f$. 
Let $S'$ be the set of nodes is $S$ which send positive flow to nodes in $A'$ under $f$. 
Since $f$ is ratio-balanced, there is no edge from $S \backslash S'$ to nodes in $A'$ and $f_{ij} = 0$ for $i \in S'$ and $j \in A \setminus A'$. 
Moreover, since any security node $i \in S'$ is connected to a $j$ such that $r^f_j = R > 0$, $\sum_j f_{ij} = v_i$. 
Otherwise, more flow can be sent through $ij$ contradicting $f$ being a maximum flow.

With respect to $f$, the total outflow of the nodes in $S'$ is equal to the total inflow of the nodes in $A'$:
$$ \sum_{i \in S'} v_i = (1 - R) \sum_{j \in A'} e_j.$$     
With respect to $g$, the total inflow of the nodes in $A'$ is at most the total outflow of the nodes in $S'$ (there might be flow from $S'$ to $A \setminus A'$):
$$\sum_{j \in A'} (1 - r^g_j) e_j \leq \sum_{i \in S'} v_i.$$ 
Therefore,
\begin{align*}(1 - R) \sum_{j \in A'} e_j &\geq \sum_{j \in A'} (1 - r^g_j) e_j\\
  \intertext{and hence}
  \sum_{j \in A'} r^g_j \cdot e_j &\geq \sum_{j \in A'} R \cdot e_j.
\end{align*}
By definition of $R$ we have $R \geq r^g_j$ that for all $j \in A$. 
So for every $j \in A'$, $r^g_j = R$ and also
$$\sum_{j \in A'} (1 - r^g_j) e_j = \sum_{i \in S'} v_i,$$
which means that also in $g$, all flow from $S'$ goes into $A'$. 

Now remove $A' \cup S'$ from the graph. 
The number of security nodes is reduced and according to the induction assumption, $f$ and $g$ are equivalent in the reduced graph.
\end{proof}

\section{Combinatorial Algorithm}\label{Combinatorial Algorithm}

We now give the algorithm for computing a ratio-balanced flow $f$. 
The algorithm works in phases. 
In each phase, it finds a maximum flow and subsets of $S$ and $A$. 
We denote the flow determined in the $k$-th phase by $f^{(k)}$ and the subsets by $S'_k$ and $A'_k$. 
The flow $f$ agrees with $f^{(k)}$ on all edges from $S'_k$ to $A'_k$ and has flow zero on all edges from $S'_k$ to $A \setminus \cup_{i \le k} A'_i$. 

Let $\lambda$ be a rational number in $[0,1]$ and consider the following flow problem $P_\lambda$. 
We add a source node $s$ and an edge $(s,i)$ of capacity $v_i$ for every $i \in S$. 
We add a sink node $t$ and an edge $(j,t)$ of capacity $\lambda e_j$ for every $j \in A$. 
We set the capacity to $+\infty$ for all edges from $S$ to $A$. 

Let $\lambda_1$ be the maximum $\lambda$ such that $P_\lambda$ has a feasible solution. 
We discuss below how to find $\lambda_1$. 
Consider the residual network with respect to the maximum flow $f^{(1)}$ in $P_{\lambda_1}$ and let $S'_1$ and $A'_1$ be the nodes that cannot be reached from $s$ by a path in the residual network.
Remove $S'_1$ and $A'_1$ from the graph and recurse until either $S$ or $A$ is empty. 

\begin{theorem}
The flow $f$  is a maximum ratio-balanced flow.
\end{theorem}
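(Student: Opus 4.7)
The plan is to induct on $|S|$, combining the structural decomposition of ratio-balanced flows from Theorem~1 with a min-cut argument for $P_{\lambda_1}$. Fix any ratio-balanced flow $f^\star$, let $R = \max_j r^{f^\star}_j$, set $A^\star = \{ j : r^{f^\star}_j = R\}$, and let $S^\star$ be the set of securities sending positive flow to $A^\star$ under $f^\star$. The proof of Theorem~1 established that every $i \in S^\star$ is saturated and sends flow only into $A^\star$, no edge of $E$ joins $S \setminus S^\star$ to $A^\star$, and $\sum_{i \in S^\star} v_i = (1-R) \sum_{j \in A^\star} e_j$.

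The first step is to identify $\lambda_1 = 1-R$ and $(S'_1, A'_1) = (S^\star, A^\star)$. For $\lambda \le 1-R$, the flow $f^\star$ itself witnesses feasibility of $P_\lambda$, since every account can absorb at least $\lambda e_j$ units. For $\lambda > 1-R$, any feasible flow in $P_\lambda$ would push more than $(1-R)\sum_{j \in A^\star} e_j$ units into $A^\star$; but the only securities with edges into $A^\star$ are those in $S^\star$, whose total out-capacity is exactly $(1-R)\sum_{j \in A^\star} e_j$, a contradiction. Thus $\lambda_1 = 1-R$, and at this value the bipartition $(\{s\} \cup (S \setminus S^\star) \cup (A \setminus A^\star),\, \{t\} \cup S^\star \cup A^\star)$ is a minimum cut of $P_{\lambda_1}$ whose sink side is $S^\star \cup A^\star$; hence $S'_1 \supseteq S^\star$ and $A'_1 \supseteq A^\star$. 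The reverse inclusions follow because any $j \in A \setminus A^\star$ has $r^{f^\star}_j < R$, leaving slack on $(j,t)$ in any max flow of $P_{\lambda_1}$, so $j$ (and its security feeders) is residually reachable from $s$.

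The second step observes that $f^{(1)}$ saturates the out-capacity of $S'_1 = S^\star$ entirely into $A'_1 = A^\star$, with total value $(1-R)\sum_{j \in A^\star} e_j$, exactly matching a ratio-balanced flow on these edges. After deleting $S'_1 \cup A'_1$, the restriction of $f^\star$ to the surviving graph is ratio-balanced there (edges into $A^\star$ from outside carry no $f^\star$-flow, so nothing is lost), so by induction on $|S|$ the subsequent phases produce an equivalent flow. Assembling the phases yields a flow $f$ agreeing with $f^\star$ account-by-account in risk ratio, which is ratio-balanced by Theorem~1; its maximality follows from the first claim inside the proof of Theorem~1.

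The main obstacle is the first step — relating the definition of $\lambda_1$ via feasibility of $P_\lambda$ to the combinatorial quantities $S^\star, A^\star$ from the ratio-balance structure, and in particular arguing that the residual-unreachable set from $s$ in the max flow of $P_{\lambda_1}$ picks out precisely the top-risk-ratio stratum. Once this identification is pinned down, the induction closes cleanly, since the algorithm's phase-by-phase peeling mirrors the peeling argument already used in Theorem~1.
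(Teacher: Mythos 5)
Your route is genuinely different from the paper's. The paper proves the theorem directly and self-containedly: it reads off from residual non-reachability that $f^{(k)}_{ij}=0$ for $i\in S'_k$, $j\in A_k\setminus A'_k$ and that no edge runs from $S_k\setminus S'_k$ into $A'_k$, observes that the $\lambda_k$ strictly increase so the risk ratios of successive strata strictly decrease, concludes the ratio-constraint~\lref{ratio-balance} directly, and then verifies maximality by a cut argument on the last phase (cases $\lambda_h<1$ and $\lambda_h=1$). You instead fix a ratio-balanced flow $f^\star$ (existence deserves a word: take an MWSR flow, which exists by compactness and is ratio-balanced by Theorem~1) and show the algorithm reproduces its stratification phase by phase. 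Your approach yields a slightly stronger conclusion (the algorithm's output is equivalent to \emph{every} ratio-balanced flow), but at the cost of importing all of Theorem~1 and a min-cut identification that the paper's direct verification never needs.

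The one genuine soft spot is the step you yourself flag: the reverse inclusions $S'_1\subseteq S^\star$ and $A'_1\subseteq A^\star$. Your stated reason --- that $r^{f^\star}_j<R$ ``leaves slack on $(j,t)$ in any max flow of $P_{\lambda_1}$'' --- is wrong as written: since $P_{\lambda_1}$ is feasible, every sink edge $(j,t)$ is saturated in every max flow, whether or not $j\in A^\star$. The correct argument goes through min cuts. The unreachable set $S'_1\cup A'_1\cup\{t\}$ is the inclusion-maximal sink side of a minimum cut; write $T=S'_1$ and $B=A'_1$. Finiteness of the cut forces every security with an edge into $B$ to lie in $T$, and equality of cut capacity with the flow value forces $\sum_{i\in T}v_i=\lambda_1\sum_{j\in B}e_j$. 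If some $j_0\in B\setminus A^\star$ existed, then $f^\star$ routes $\sum_{j\in B}(1-r^{f^\star}_j)e_j>\lambda_1\sum_{j\in B}e_j$ into $B$, all of it from securities in $T$, whence $\sum_{i\in T}v_i>\lambda_1\sum_{j\in B}e_j$, a contradiction; a similar argument rules out $i\in T\setminus S^\star$ with $v_i>0$. With that filled in, your induction closes as described (the final appeal to Theorem~1 is needed, since equivalence alone does not preserve condition~\lref{ratio-balance}, but it does preserve the MWSR property). Note also that both your argument and the paper's quietly assume each phase removes a nonempty set, which can fail for fully covered accounts and is only addressed in the paper's ``Over-Coverage'' extension.
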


\begin{proof}

  Let $S_k$ and $A_k$ be the set of remaining securities and accounts in the beginning of $k$-th step respectively; $S_1 = S$ and $A_1 = A$. 
  In the $k$-th phase, the flow network has vertices $S \setminus \cup_{i < k} S_i'$ on the $S$-side and vertices $A \setminus \cup_{i < k} A_i'$ on the $A$-side. 
  Let $\lambda_k$ be the maximum $\lambda$ such that $P_\lambda$ has a feasible solution in the $k$-th phase, let $f^{k)}$ be the maximum flow in the $k$-th phase and let $S'_k \in S_k$ and $A'_k \in A_k$ be the nodes that cannot be reached from $s$ by a path in the residual network with respect to $f^{(k)}$. 

Clearly $f^{(k)}_{ij} = 0$ for $i \in S'_k$ and $j \in A_k \setminus A'_k$ and $(i,j) 
\not\in E$ for $i \in S_k \setminus S'_k$ and $j \in A'_k$ because of non-reachability  in the residual network.

Also for $j \in A_k \setminus A'_k$, the security ratio is larger than $\lambda _k$
because the residual network certifies that we can send more flow. 
Which means that if $\ell > k$, $\lambda_{\ell} > \lambda_k$. 

So if $f_{ij} > 0$ then there exists $k$ such that $i \in S'_k$ and $j \in A'_k$. 
There is no edge from $i$ to $S'_{\ell}$ such that $\ell < k$ and for any $j \in S'_{\ell}$ such that $\ell > k$, $f_{ij} = 0$. 
Therefore, $f$ satisfies the ratio-constraint~\lref{ratio-balance}.

Next we prove that $f$ is a maximum flow.

If $\lambda_k < 1$, then for all $i \in S'_k$, $\sum_j f_{ij} = v_i$. 
Otherwise $i$ would be reachable from $s$. 

Let $h$ be the number of steps. 
If $\lambda_h < 1$, then the total flow under $f$ is $\sum_{i \in S} v_i$. 
The total flow in a maximum flow cannot exceed this amount. 
So, $f$ is a maximum flow.

Now assume that $\lambda_h = 1$. 
It means for every $j \in A'_h$, $\sum_i f_{ij} = e_j$. 
Hence, the total flow in $f$ is 
$$\sum_{i \notin S'_h} v_i + \sum_{j \in A'_h} e_j.$$
Now let the total flow in a maximum flow be $F$. 
Then
$$ F = \sum_{i \in S} \sum_{j \in A} F_{ij} = \sum_{i \notin S'_h} \sum_{j \in A} F_{ij} + \sum_{i \in S'_h} \sum_{j \in A} F_{ij} \leq \sum_{i \notin S'_h} v_i + \sum_{j \in A'_h} e_j. $$

\noindent
The last inequality holds because there is no edge from $S'_h$ to $A \backslash A'_h$. So any outflow from $S'_h$ is inflow for $A'_h$. 
Therefore, $f$ is a maximum flow.
\end{proof}

We next show how to find $\lambda_i$ efficiently. 
For this we assume that the $v_i$ and $e_j$ are integers and use $M$ to the denote their maximum. 
For any $j \in A$, let $r^f_j$ be the risk-ratio of node $j$ under $f$. 

\begin{lemma}
For every $j \in A$, $r^f_j$ is a rational number with numerator and denominator bounded by $nM$. 
\end{lemma}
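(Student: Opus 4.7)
The plan is to express $r^f_j$ in closed form by analyzing the phase in which $j$ is finalized. Let $k$ be the unique index with $j \in A'_k$. I first claim that $r^f_j = 1 - \lambda_k$. Since $P_{\lambda_k}$ is feasible, the max flow $f^{(k)}$ saturates every edge $(j',t)$ with $j' \in A_k$, so the inflow of $j$ under $f^{(k)}$ is exactly $\lambda_k e_j$. By construction $f$ agrees with $f^{(k)}$ on edges from $S'_k$ to $A'_k$, and no edge from any $S'_\ell$ with $\ell \neq k$ contributes flow into $j$ (for $\ell < k$ there are no edges into $A'_k$ as noted in the previous theorem, and for $\ell > k$ the node $j$ is already gone from the graph). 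Hence the inflow of $j$ under $f$ equals $\lambda_k e_j$, and $r^f_j = 1 - \lambda_k$.

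Next I would pin down $\lambda_k$ via max-flow/min-cut applied to $P_{\lambda_k}$. The set of nodes unreachable from $s$ in the residual network of $f^{(k)}$ is exactly $\{t\} \cup S'_k \cup A'_k$, so by LP duality this partition defines a minimum cut. Its capacity is
\[ \sum_{i \in S'_k} v_i \;+\; \lambda_k \sum_{j' \in A_k \setminus A'_k} e_{j'}, \]
because the only other potential cut edges, namely $(i,j') \in E$ with $i \in S_k \setminus S'_k$ and $j' \in A'_k$, carry infinite capacity and therefore cannot exist in a finite min cut (this is the structural observation recorded in the preceding theorem). Equating this cut capacity with the max-flow value $\lambda_k \sum_{j' \in A_k} e_{j'}$ and cancelling the common terms yields
\[ \lambda_k \;=\; \frac{\sum_{i \in S'_k} v_i}{\sum_{j' \in A'_k} e_{j'}}. \]

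It then follows that $r^f_j = 1 - \lambda_k = \bigl(\sum_{j' \in A'_k} e_{j'} - \sum_{i \in S'_k} v_i\bigr)\big/\sum_{j' \in A'_k} e_{j'}$. Both numerator and denominator are nonnegative integers (the numerator is nonnegative because $\lambda_k \le 1$) and each is a sum of at most $n$ values bounded by $M$, giving the required bound of $nM$. The main obstacle is really the min-cut identification: once one is confident that the reachable/unreachable partition genuinely produces a minimum cut \emph{and} that no infinite-capacity edge crosses it, the rest is arithmetic. A trivial edge case is $\lambda_h = 1$ in the final phase, where $r^f_j = 0$ and the bound is immediate.
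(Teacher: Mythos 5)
Your proof is correct and follows essentially the same route as the paper: identify the phase $k$ with $j \in A'_k$, use the structural facts from the preceding theorem to conclude that all flow into $A'_k$ comes from $S'_k$ and totals $\sum_{i \in S'_k} v_i$, and read off $r^f_j = 1 - \sum_{i \in S'_k} v_i / \sum_{\ell \in A'_k} e_\ell$ before counting digits; the only difference is that the paper obtains this identity by direct flow conservation on $S'_k \cup A'_k$, whereas you package it as a max-flow/min-cut computation for $P_{\lambda_k}$. One small imprecision: edges from $S'_\ell$ with $\ell < k$ into $A'_k$ may well exist --- what the construction guarantees is only that $f$ is zero on them, which is all your argument actually needs.
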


\begin{proof}
If $r^f_j = 0$, the claim is obvious. 
So assume $r^f_j > 0$. 
Consider the $k$ such that $j \in A'_k$. 
All flow into the nodes in $A'_k$ comes from the nodes in $S'_k$ and the total flow from $S'_k$ to $A'_k$ is equal to $\sum_{i \in S'_k} v_i$. 
All nodes in $A'_k$ have the same risk-ratio. 
This ratio is equal to $1 - \frac{\sum_{i \in S'_k} v_i} {\sum_{\ell \in A'_k} e_\ell}$.
\end{proof}

\begin{theorem} \cite{Papadimitriou79}
Let $x$ be a fraction, both numerator and denominator of which are bounded by $M$. 
Then $x$ can be determined by $O(log(M))$ queries of form "is $x \leq p/q$?", where $p, q \leq 2M$, and $O(log (M))$ arithmetic operations on integers of size not greater than $2M$. 
\end{theorem}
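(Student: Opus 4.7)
The plan is to perform an accelerated binary search on the Stern--Brocot tree, exploiting the fact that every positive rational in lowest terms appears there exactly once. Starting from the boundary fractions $0/1$ and $1/0$, at each node we maintain an interval $(p_L/q_L,\, p_R/q_R)$ containing $x$, where the mediant $(p_L+p_R)/(q_L+q_R)$ is the next candidate. A single query "is $x \le (p_L+p_R)/(q_L+q_R)$?" tells us whether to descend left or right. A naive walk down the tree reaches any $x=a/b$ with $a,b \le M$, but the depth can be as bad as $M$ (for example, $x = 1/M$ requires $M$ consecutive left steps).

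To bring the number of queries down to $O(\log M)$, I would batch together runs of consecutive same-direction steps. A run of $k$ right moves replaces the current left endpoint with $(p_L + k p_R)/(q_L + k q_R)$, and the value of $k$ selected in each run is precisely the next partial quotient in the continued-fraction expansion of $x$. I would find each $k$ by the standard doubling trick: probe $k = 1, 2, 4, 8, \ldots$ using the oracle until the comparison flips, then do an ordinary binary search in the resulting range $[2^{j-1}, 2^j]$. This locates a run of length $k$ using $O(\log k)$ queries and the same number of additions/multiplications on the running mediant.

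The key estimate is that the total cost across all runs is $O(\log M)$. If $k_1, k_2, \ldots, k_t$ are the partial quotients of $a/b$, then the denominators $q_i$ of the convergents satisfy $q_i \ge k_i\, q_{i-1}$, so $\prod_i k_i \le q_t \le b \le M$, and consequently $\sum_i \log k_i \le \log M$. Summing the per-run $O(\log k_i)$ cost gives $O(\log M)$ queries overall. For the size bound on the numbers used in a query, note that the fractions appearing in the final answer of each run are intermediate convergents of $x$, whose numerators and denominators are $\le M$; while doubling we may probe one fraction of the next order, which is at most twice the previous, hence bounded by $2M$. All intermediate arithmetic reduces to adding and scaling by the current partial quotient, and stays within the same size bound.

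The main obstacle is the quantitative bookkeeping: making sure that the batched (doubling $+$ binary search) phase is implemented so that the mediants actually probed never exceed $2M$ in either coordinate, and that once an overshoot is detected we correctly roll back before starting the next run. Modulo that careful bounding, the algorithm's correctness is immediate from the Stern--Brocot characterization and its cost is controlled by the Fibonacci-growth inequality above.
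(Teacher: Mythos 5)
The paper does not prove this theorem at all --- it is quoted, with attribution, from Papadimitriou's 1979 note \emph{Efficient search for rationals} --- so there is no in-paper argument to compare against. Your Stern--Brocot descent with batched runs is in substance the standard proof of that result (and essentially Papadimitriou's own, which extracts the partial quotients of the continued-fraction expansion one at a time by a galloping search), so the approach is the right one. Two points in your accounting need tightening. First, the inequality $\sum_i \log k_i \le \log M$ does not by itself bound the query count: each run costs $O(1+\log k_i)$ queries, and $\log k_i = 0$ whenever $k_i = 1$, so you also need the number of runs $t$ to be $O(\log M)$; this follows from the recurrence $q_i = k_i q_{i-1}+q_{i-2} \ge q_{i-1}+q_{i-2}$, i.e.\ Fibonacci growth of the convergents' denominators, which you allude to but should state as a separate inequality rather than folding into the product bound. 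Second, you do not say how the search terminates: with only ``is $x \le p/q$?'' queries there is no equality test, so you must invoke the fact that any fraction lying strictly between two Stern--Brocot neighbours $p_L/q_L$ and $p_R/q_R$ has denominator at least $q_L+q_R$; hence once $q_L+q_R > M$ the target $x$ must be one of the two endpoints, and one further query identifies it. With those two additions, together with your observation that every probed fraction has the form $(p_{i-2}+k\,p_{i-1})/(q_{i-2}+k\,q_{i-1})$ with $k \le 2k_i$ and hence numerator and denominator at most $2M$, the argument is complete and matches the cited source.
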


Instead of finding $\lambda_i$, we find $1 - \lambda_i$ which is also a fraction with both numerator and denominator bounded by $nM$. 
In order to answer each query, we check if $P_{1 - p/q}$ has a feasible solution or not. 
If it does, then $\lambda_i \geq 1 - p/q$ which means $1 - \lambda_i \leq p/q$. 
Otherwise, $\lambda_i > 1 - p/q$ or $1 - \lambda_i > p/q$.

We need to find at most $n$ $\lambda$-values. 
For each one we need to answer $\log (nM)$ queries. 
Each query is a maxflow-computation.

\begin{theorem} Let the $v_i$'s and $e_j$'s be integer and let $M$ be their maximum. 
A ratio-balanced flow can be computed with $n \log (nM)$ maxflow-computations. 
\end{theorem}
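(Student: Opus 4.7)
The plan is to combine the three preparatory ingredients already in place: (i) the phase structure of the algorithm bounds the number of $\lambda$-values one must determine, (ii) the previous lemma pins down how arithmetically simple each $\lambda_k$ is, and (iii) Papadimitriou's theorem converts that arithmetic simplicity into a logarithmic number of comparison queries, each of which is realized by a single maximum-flow computation.

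First I would bound the number of phases. Each phase removes a nonempty set $S'_k \cup A'_k$ of vertices from the working graph (nonemptiness of $S'_k \cup A'_k$ follows from the tight cut certifying $\lambda_k$ together with the observation that the source side of that cut certifies the maximality of $\lambda_k$). Since the algorithm terminates when $S$ or $A$ becomes empty and the working graph has at most $n = |S|+|A|$ vertices, there are at most $n$ phases, and hence at most $n$ values $\lambda_1,\dots,\lambda_h$ to be computed.

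Next I would apply the previous lemma to each phase: the value $1-\lambda_k$ equals the common risk ratio $r^f_j$ for $j \in A'_k$, so it is a rational with numerator and denominator bounded by $nM$. Invoking Papadimitriou's theorem (with $nM$ in place of $M$), $1-\lambda_k$ can be pinned down by $O(\log(nM))$ queries of the form ``is $1-\lambda_k \le p/q$?'' with $p,q \le 2nM$. Such a query is equivalent to asking whether $P_{1-p/q}$ admits a feasible flow saturating all sink-edges, which is a single maximum-flow computation on a graph whose edge capacities fit in $O(\log(nM))$ bits (after clearing denominators). Multiplying, the total number of maximum-flow computations is $O(n \log(nM))$, as claimed.

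The only part that requires any care is verifying that Papadimitriou's comparison test is indeed implementable by a maximum-flow computation with the stated capacity bounds: the capacities on the edges $(j,t)$ become $(1-p/q)e_j$, which after scaling by $q$ are integers bounded by $2nM \cdot M = O(nM^2)$, well within the regime of ordinary integer max-flow. Everything else is bookkeeping on top of the already-proved correctness and the lemma on denominator sizes, so this is where I would focus the (brief) argument.
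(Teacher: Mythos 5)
Your proposal is correct and follows essentially the same route as the paper: at most $n$ phases hence at most $n$ values $\lambda_k$ to determine, each $1-\lambda_k$ being a rational with numerator and denominator bounded by $nM$ by the preceding lemma, so Papadimitriou's search pins it down with $O(\log(nM))$ feasibility queries to $P_{1-p/q}$, each a single max-flow computation. The only addition beyond the paper's (very terse) argument is your explicit check that the scaled capacities remain small integers, which is a reasonable piece of extra care rather than a deviation.
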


For balanced flows (definition given in the introduction), the number of maxflow-computations can be reduced to a single parameterized flow computation~\cite{Darwish-Mehlhorn}. 
The same improvement might be possible here.

\section{Solution by Formulation as a Quadratic Program} \label{Quadratic Program}

The task ``minimize $\sum_j e_j r_j^2$ subject to~\lref{v-constraint} and~\lref{e-constraint}'' is a quadratic program. 
As such it can be (approximately) solved by any QP-package, e.g., CVXOPT\cite{cvxopt}. 
For concreteness, we give the formulation as a standard 
QP problem in the notation used in CVXOPT. 
\begin{align}
\nonumber \underset{\mathbf{x}}{\operatorname{minimize}} \quad  & \frac{1}{2} \mathbf{x}^T P \mathbf{x} + \mathbf{q}^T \mathbf{x} \\
\label{eq:QP} \text{subject to } \quad & G\mathbf{x}  \leq \mathbf{h} \\ 
\nonumber
& A\mathbf{x} = \mathbf{b}
\end{align}
We use $\mathbf{1}$ for the all-ones column vector and $\hat{\mathbf{e}}_j$ for the $j$-th unit vector. 
We number the edges of the graph arbitrarily and use $\mathbf{x}$ for the vector of flows. 
The matrices $K$ and $V$ connect the flow variables to the securities and accounts:
\[ K_{ij} = 
	\begin{cases} 
	1 & \text{if } x_j \text{ is incident to } e_i \\
	0       & \text{else}
	\end{cases}  \qquad
	V_{ij} = 
\begin{cases} 
1 & \text{if } x_j \text{ is incident to } v_i \\
0       & \text{else}
\end{cases}\]
Figure~\ref{ex:large_general} shows an example. 
We are now ready to formulate the objective function and the constraints as a QP.

\begin{figure}[t]
  \begin{equation*} \footnotesize
    \begin{tikzcd}[column sep=6em, row sep=1.5em]
v_1 = 8 \arrow{dr}{x_{2}} \arrow{r}{x_{1}}  & e_1 = 12 \\
& e_2 = 8 \\
v_2 = 8 \arrow{r}{x_{5}} \arrow{ur}{x_{4}} \arrow{uur}{x_{3}} & e_3 =16
\end{tikzcd}\quad
\mathbf{x} = \begin{pmatrix}
x_{1} \\ 
x_{2} \\ 
x_{3} \\ 
x_{4} \\ 
x_{5} 
\end{pmatrix} \quad
K = \begin{pmatrix}
1 &  0& 1 & 0 &0 \\ 
0& 1 & 0 & 1 &0 \\ 
0& 0 & 0 & 0 & 1
\end{pmatrix} \quad
V = \begin{pmatrix}
1 &  1& 0 & 0 &0 \\ 
0& 0 & 1 & 1 & 1
\end{pmatrix}
\end{equation*}
\caption{\label{ex:large_general}  An example with three securities and two accounts.}
\end{figure}
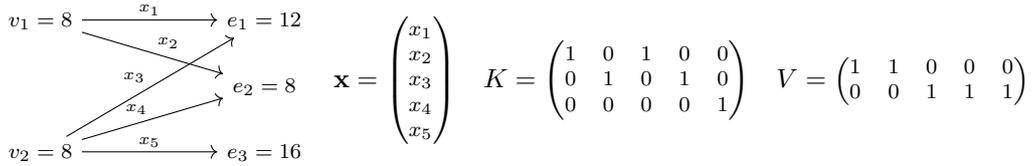

\begin{lemma} In standard form, the exposure-weighted sum-of-squares error function can be written as follows:
	\begin{equation*}
	\sum_{j }  e_j r_j^2 
	=
		\sum_{j }  
		\frac{\left(e_j - \hat{\mathbf{e}}_j^T K \mathbf{x} \right)^2}{e_j} = \sum_{j } e_j +
		\frac{1}{2} \mathbf{x}^T \left( 2 K^T \operatorname{diag} \left( (e_j)^{-1} \right) K \right) \mathbf{x} 
		- 2\cdot \mathbf{1}^T \mathbf{x},
	\end{equation*}
	so that $P = 2 K^T \operatorname{diag}\left( (e_j)^{-1} \right) K$ and $\mathbf{q} = -2 \cdot \mathbf{1}$ in Equation \eqref{eq:QP}.
\end{lemma}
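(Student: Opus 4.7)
The proof is essentially a direct algebraic manipulation, so the plan is to verify the two equalities in turn and then read off $P$ and $\mathbf{q}$.

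First I would justify the leftmost equality. By definition, $r_j = s_j/e_j$ with $s_j = e_j - \sum_i f_{ij}$, so $e_j r_j^2 = s_j^2 / e_j$. Since $K$ has a row per account and a column per edge with $K_{jk}=1$ iff edge $k$ is incident to account $j$, the product $\hat{\mathbf{e}}_j^T K \mathbf{x}$ picks out $\sum_i f_{ij}$, the total flow into account $j$. Hence $s_j = e_j - \hat{\mathbf{e}}_j^T K \mathbf{x}$ and the first equality is immediate after summing over $j$.

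Second, I would expand the square term-by-term:
\[
\frac{(e_j - \hat{\mathbf{e}}_j^T K \mathbf{x})^2}{e_j} = e_j - 2\,\hat{\mathbf{e}}_j^T K \mathbf{x} + \frac{(\hat{\mathbf{e}}_j^T K \mathbf{x})^2}{e_j}.
\]
Summing over $j$, the first piece gives the constant $\sum_j e_j$. For the quadratic piece, $\sum_j (\hat{\mathbf{e}}_j^T K \mathbf{x})^2/e_j = \mathbf{x}^T K^T \operatorname{diag}((e_j)^{-1}) K \mathbf{x}$, matching $\tfrac12 \mathbf{x}^T P \mathbf{x}$ with $P = 2 K^T \operatorname{diag}((e_j)^{-1}) K$. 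The only step that requires a small observation is the linear piece: $\sum_j \hat{\mathbf{e}}_j^T K \mathbf{x} = \mathbf{1}^T K \mathbf{x}$, and the key fact is that in our bipartite setup every edge is incident to exactly one account, so each column of $K$ sums to $1$, i.e.\ $\mathbf{1}^T K = \mathbf{1}^T$. This collapses the linear term to $-2\,\mathbf{1}^T \mathbf{x}$, giving $\mathbf{q} = -2 \cdot \mathbf{1}$.

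There is no real obstacle here; the only subtlety is remembering that $K$ is edge-to-account incidence (so column sums are $1$), without which the $-2\,\mathbf{1}^T\mathbf{x}$ form would not come out cleanly. Note also that the additive constant $\sum_j e_j$ is irrelevant to the minimizer and so need not be passed to the QP solver, but I would keep it in the displayed identity to match the statement verbatim.
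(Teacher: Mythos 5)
Your proof is correct and follows essentially the same route as the paper: expand the square, identify the constant, quadratic, and linear pieces, and use the column-sum property $\mathbf{1}^T K = \mathbf{1}^T$ to collapse the linear term. You additionally spell out the first equality ($\hat{\mathbf{e}}_j^T K \mathbf{x} = \sum_i f_{ij}$), which the paper leaves implicit.
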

\begin{proof}
	The goal is to write $\sum_j e_j r_j^2$ in the form $\frac{1}{2} \mathbf{x}^T P \mathbf{x} + \mathbf{q}^T \mathbf{x}$.
	To do so, we expand the square by writing
	\begin{align*}
	\sum_{j} ( e_j)^{-1} \left( e_j - \hat{\mathbf{e}}_j^T K \mathbf{x} \right)^2 
	&= \sum_{j }  ( e_j)^{-1} \left( e_j^2 - 2  e_j^T \hat{\mathbf{e}}_j^T K \mathbf{x}  +  \mathbf{x}^T K^T \hat{\mathbf{e}}_j \hat{\mathbf{e}}_j^T K \mathbf{x} \right) \\
	&= \sum_{j }   e_j - \sum_{j =1}^n  2  \hat{\mathbf{e}}_j^T K \mathbf{x}  + \mathbf{x}^T K^T \left( \sum_{j } ( e_j)^{-1} \hat{\mathbf{e}}_j \hat{\mathbf{e}}_j^T \right ) K \mathbf{x} \\
	&=    \underbrace{- 2 \mathbf{1}^T}_{ \mathbf{q}^T}  \mathbf{x}  + \mathbf{x}^T 
	\underbrace{K^T \operatorname{diag} \left( ( e_j)^{-1} \right) K}_{\frac{1}{2}P} \mathbf{x} + \sum_{j = 1}^n e_j.
	\end{align*}
	We 
        used the fact that $\mathbf{1}^T K = \mathbf{1}^T$ since $\sum_{i} K_{i j} = 1$ for every column $j$.
	The diagonal matrix $\operatorname{diag}(e_j^{-1})$ has $e_j^{-1}$ in position $(j, j)$. 
\end{proof}
In matrix notation, the constraints are $-I \mathbf{x} \le 0$ and $V \mathbf{x} \le \mathbf{v}$ and $K \mathbf{x} \le \mathbf{e}$. 

    For the example in Figure~\ref{ex:large_general}, the CVXOPT package solves the QP in 10 milliseconds
    and the algorithm uses 5 iterations. 
    The reported solution is 
\begin{equation*}
	\mathbf{x} = (x_1, x_2, x_3, x_4, x_5) = \left( 4.88, 3.12, 0.46, 0.43, 7.11 \right),
\end{equation*}
which gives (almost) equal risk ratios
\begin{equation*}
	\frac{4.88 + 0.46}{12} \approx \frac{3.12 + 0.43}{8} \approx \frac{7.11}{16} \approx 0.444.
\end{equation*}


\section{Extensions}\label{Extensions}
We discuss some extensions.

\paragraph{Over-Coverage:} Some accounts will be fully covered, meaning that their $r_j$'s will be zero.
Let $A'$ be the set of accounts that are fully covered and let $S'$ be the securities sending flow to them. We restrict the flow problem $P_\lambda$ to these accounts and securities and then proceed as in Section~\ref{Combinatorial Algorithm}. Let $\lambda_1$ be the maximum $\lambda \ge 1$ such that $P_\lambda$ has a feasible solution. 
Consider the residual network with respect to the maximum flow $f^{(1)}$ in $P_{\lambda_1}$ and let $S'_1$ and $A'_1$ be the nodes that cannot be reached from $s$ by a path in the residual network. 
Remove $S'_1$ and $A'_1$ from the graph and recurse until either $S'$ and $A'$ are empty. In the example below, $\lambda_1 = 1$, $S'_1 = \sset{1}$, $A'_1 = \sset{1}$, $f_{11} = 1$ and $f_{12} = 0$. Next, we have $\lambda_2 = 5$, $S'_2 = \sset{2,3}$, $A'_2 = {2}$, and $f_{22} = 2$ and $f_{32} = 3$. 

\begin{center}
  \begin{tikzcd}[column sep=8em, row sep=2em]
	v_1 = 1 \arrow[swap]{ddr}{0} \arrow{r}{1}  & e_1 = 1,\ \lambda_1 = 1 \\
	v_2 = 2  \arrow{dr}{2} &  \\
	v_3 = 3  \arrow{r}{3} & e_2 = 1,\ \lambda_2 = 5
      \end{tikzcd}
      \end{center}

\paragraph{Limits to a Claim:}
An account might contractually only have claim to parts of the security value. 
This is easily modeled by introducing an upper bound on the flow from a security to an account. 
The QP-algorithm and the combinatorial algorithm can handle such bounds.

\paragraph{Priorities:}
In the real world, accounts are often arranged by their \emph{priority} to a security object. 
If two accounts have prioritized claims to a security object, the account with highest priority (lowest priority number) gets its demand covered first. 
Remaining value goes to lower priority accounts. 
In the following example, we use 
parenthesized superscripts to denote priorities; account 2 has a lower priority (higher priority number) than account 3 on security 2. 
In other words, account 3 has ``first rights.''
	\begin{equation*}
	\begin{tikzcd}[column sep=8em, row sep=2em]
	v_1 = 20 \arrow{dr}{x_{2}^{(1)}} \arrow{r}{x_{1}^{(1)}}  & e_1 = 20 \\
	v_2 = 20 \arrow{r}{x_{3}^{(2)}} \arrow{dr}{x_{4}^{(1)}}  & e_2 = 20 \\
	& e_3 = 5
	\end{tikzcd}
      \end{equation*}
      Assume we have $P$ different priority classes. 
      The desired solution is a maximum flow on the edges of priority 1. 
      Subject to this, it should be a maximum flow on the edges of priority 2, and so on. 
      Subject to this, it should be a maximum flow on the edges of priority $P$. 
      Subject to this, the flow should be ratio-balanced.

In the example above, the desired solution is $\mathbf{x} =  (17.5, 2.5, 15, 5)$. 
The flow on the edges of priority 1 is 25 and the flow on the edges of priority 2 is 15. 
Subject to this the flow balances the uncovered fractions of accounts 1 and 2. 

Frederic Dorn (Sparebanken Vest) suggested the use of minimum cost flows for modeling the priorities in combination with the objective for a balanced flow. 
Let $p=1,2,\ldots, P$ be the priorities, let $E^{{(p)}}$ be the set of edges with priority $p$ and $\epsilon$ a small number. 
Consider now the following optimization problem, which is a QP.
\begin{equation} \label{eqn:max_flow_quad} \underset{f}{\operatorname{minimize}} \quad 
- \sum_{p=1}^{P} \epsilon^p
\sum_{(i,j) \in E^{(p)}}
f_{ij}
+
\epsilon^{P+1} \sum_{j }   e_j r_j^2 \end{equation}
subject to $f \ge 0$ and \lref{v-constraint} and~\lref{e-constraint}. 
The first term in the objective sends as much flow as possible through the graph, but prioritizing first priority much stronger than second, the second much stronger than third, and so forth.
The second term states that everything else being equal, a ratio-balanced flow is preferable.

The combinatorial approach of Section~\ref{Combinatorial Algorithm} works too. 
We only have to replace the use of a maximum flow algorithm by the use of a minimum cost flow algorithm which minimizes the linear part of the objective in~\lref{eqn:max_flow_quad}.

\newcommand{\htmladdnormallink}[2]{#1}

\bibliographystyle{alpha}

\bibliography{ref,KMQuadratic}   

\end{document}